\theoremstyle{plain}
\newtheorem{theorem}{Theorem}
\newtheorem{remark}{Remark}
\newtheorem{corollary}{Corollary}
\newtheorem{lemma}{Lemma}
\newtheorem{claim}{Claim}
\newtheorem{definition}{Definition}
\newcommand{\hlc}[2][yellow]{ {\sethlcolor{#1} \hl{#2}} }
\newcommand{\moti}[1]{\hlc[green]{\textbf{MM}: #1}}
\newcommand{\adi}[1]{\hlc[yellow]{\textbf{AR}: #1}}
\algnewcommand\algorithmicswitch{\textbf{switch}}
\algnewcommand\algorithmiccase{\textbf{case}}
\algnewcommand\algorithmicassert{\texttt{assert}}
\algnewcommand\Assert[1]{\State \algorithmicassert(#1)}
\newtheorem{observation}[theorem]{Observation}
\newcommand{\INDSTATE}[1][1]{\State\quad}
\newcommand{\eps}{\varepsilon}
\newcommand{\poly}{\text{\emph{poly}}}
\newcommand{\opt}{\text{\textsf{opt}}}
\newcommand{\optfrac}{F^*_{\textit{frac}}}
\newcommand{\optint}{F^*_{\textit{int}}}
\newcommand{\dmin}{d_{\min}}
\newcommand{\dmax}{d_{\max}}
\newcommand{\pmax}{p_{\max}}
\newcommand{\eqdf}{\triangleq}
\newcommand{\alg}{\text{\sf{alg}}}
\newcommand{\cut}{\textit{cut}}
\newcommand{\maxpr}{Max-Pkt-Line}
\newcommand{\maxgrid}{Max-Path-Grid}
\newcommand{\Gst}{G^{st}}
\DeclarePairedDelimiter{\ceil}{\lceil}{\rceil}
\newcommand{\ind}[1]{\mathbbm{1}_{#1}}
\newcommand{\expec}[1]{\mathbf{E} \left[ {#1} \right]}
\newcommand{\expectau}[1]{\mathbf{E}_{\tau} \left[ {#1} \right]}
\newcommand{\prbig}[1]{\mathbf{Pr} \left[ \mathopen{#1} \right]}
\newcommand{\Rrnd}{R_{\textit{rnd}}}
\newcommand{\Rfilter}{R_{\textit{fltr}}}
\newcommand{\Rquad}{R_{\textit{quad}}}
\newcommand{\out}{\mathsf{out}}
\newcommand{\inn}{\mathsf{in}}
\newcommand{\dem}{\mathsf{dem}}
\newcommand{\capa}{\mathsf{cap}}
\newcommand{\vshort}{\ell_{VS}}
\newcommand{\short}{\ell_S}
\newcommand{\medium}{\ell_M}
\newcommand{\ignore}[1]{}
\title{A Constant Approximation Algorithm for Scheduling Packets on Line Networks\thanks{A preliminary version of this paper appeared in the proceedings of ESA 2016~\cite{DBLP:conf/esa/EvenMR16}.}}
\date{}
\author[1]{Guy Even}
\author[2]{Moti Medina}
\author[3]{Adi Ros\'{e}n}
\affil[1]{
 Tel Aviv University\\
  \texttt{guy@eng.tau.ac.il}}
\affil[2]{Ben-Gurion University of the Negev\\
\texttt{medinamo@bgu.ac.il}}
\affil[3]{ CNRS and Universit\'{e} de Paris\\
\texttt{adiro@irif.fr}}
\begin{document} 

\maketitle
\begin{abstract}
  In this paper we improve the approximation ratio for the problem of scheduling packets on  line
  networks with bounded buffers, where the aim is that of   maximizing the throughput.
  Each node in the network has a local buffer of bounded size $B$, and each edge (or
  link) can transmit a limited number, $c$, of packets in every time unit.  The input
  to the problem consists of a set of packet requests, each defined by
    a source node, a destination node, and a release time.
    We denote by $n$ the size of the network. A solution for this problem is
    a schedule that delivers (some of the)
  packets to their destinations without violating the capacity constraints of the
  network (buffers or edges). Our goal is to design an efficient algorithm that computes a
  schedule that maximizes the number of packets that arrive to their
  respective destinations.

  We give a randomized approximation algorithm with constant approximation ratio
  for the case where $B=\Theta(c)$.
   This improves over the  previously best  result of $O(\log^* n)$~\cite{RR}.
  Our improvement is based on a new combinatorial lemma that we prove, stating,
   roughly speaking,
   that if packets are allowed to stay put  in buffers only a limited number of time
   steps, $2d$, where $d$ is the longest source-destination distance of any input packet, then the cardinality of the
   optimal solution is decreased by only a constant factor. This claim was not previously known
    in the directed integral (i.e., unsplittable, zero-one) case, and may find additional applications for
     routing and scheduling algorithms.
\end{abstract}

\paragraph{keywords.} Approximation algorithms,  packet scheduling,
admission control, randomized rounding, linear programming.

\section{Introduction}

In this paper we give an approximation algorithm with an improved approximation
ratio for a network-scheduling problem which has been studied in numerous previous
works in a number of variants (cf.~\cite{AKOR,AKK,AZ,EM14,RosenS11,RR,spaaEvenMP15}). The
problem consists of a directed line-network over nodes $\{0,\ldots,n-1\}$, where each
node $i$ can send packets to node $i+1$, and can also store packets in a local
buffer. The maximum number of packets that can be sent in a single time unit over a given link is
denoted by $c$, and the number of packets each node can store at any given time is
denoted by $B$. An instance of the problem  is further defined by a set  of packets
$r_i=(a_i,b_i,t_i)$, $ 1 \leq i \leq M$, where $a_i$ is the source node of the
packet, $b_i$ is its destination node, and $t_i \geq 1$ is the release time of the packet
at vertex $a_i$.  The goal is that of maximizing the number of packets
that reach their respective destinations without violating the links or the buffers capacities.
We give a  randomized approximation algorithm for that problem,
which has a {\em constant} approximation ratio for the case of $B=\Theta(c)$,
improving upon the previous $O(\log^*n)$ approximation ratio given
in~\cite[Theorem 3]{RR}.

Key to our algorithm is a combinatorial lemma (Lemma~\ref{lemma:bounded path length})  which states the following.  Consider a set of packets such that all source-destination distances are
bounded from above by some $d$. The throughput of an optimal solution in which
every  packet $r_i$ must
reach its destination no later than time $t_i+2d$ is an $\Omega(B/c)$-fraction of the throughput of the
unrestricted optimal throughput. This lemma  plays a crucial role in our algorithm, and we
believe that it may find additional applications for scheduling and routing algorithms in networks.
We emphasize that the fractional version of a similar property, i.e., when packets are unsplittable and one accrues a benefit also from the delivery of partial packets, presented first in~\cite{AZ}, does not
imply the integral version that we prove here.

We emphasize  that the problem studied in the present paper, namely, maximizing the throughput on a network with
bounded buffers, has resisted substantial efforts in its (more applicable) distributed, online setting, even for the
simple network of a directed line. Indeed, even the question whether or not there exists
 a constant competitive online distributed algorithm for that problem on the line network
remains unanswered  at this point. We therefore study here the   offline setting with the
hope that, in addition to its own interest, results and ideas from this setting will
contribute to progress on the distributed  problem.
\subsection{Related Work}
The problem of scheduling packets so as to maximize the throughput (i.e., maximize the number of packets that
 reach their destinations) in a network with bounded buffers was first considered in~\cite{AKOR}, where this
 problem is studied for various types of networks in the distributed  setting. The results
 in that paper, even for the simple network of a directed line, were far from tight but no substantial
 progress has been made since on the realistic, distributed and online, setting.  This has motivated the
  study of this problem in easier settings, as a first step towards solving the realistic, possibly applicable,
  scenario.

  Angelov et al.~\cite{AKK}  give centralized online randomized algorithms for the line network,
  achieving an $O(\log^3 n)$-competitive ratio.  Azar and Zachut~\cite{AZ} improved the randomized competitive ratio to $O(\log^2n)$ which was later improved by Even and Medina~\cite{icalpEvenM10,EM14} to $O(\log n)$.
  A deterministic $O(\log^5n)$-competitive algorithm was given in~\cite{spaaEvenM11,EM14}, which was later improved
  in~\cite{spaaEvenMP15} to $O(\log n)$ if buffer and link capacities are not very small (not smaller than $5$).

   The related problem of
   maximizing the throughput when packets have deadlines (i.e., a packet is counted towards the quality of the solution only if it arrives to its destination before a known deadline) on line network with unbounded input queues is known to be NP-hard~\cite{ARSU}. The
   same problem in a certain variant of the setting, where the input queues are bounded, is shown in~\cite{RR}
   to have an $O(\log^*n)$-approximation randomized algorithm. The setting in the present paper is the same setting as the one
   of the latter paper, and the results of~\cite{RR}
     immediately give an $O(\log^*n)$-approximation randomized algorithm for the problem and setting we
     study in the present paper.

\section{Preliminaries}
\label{sec:problem}

\subsection{Model and problem statement}
We consider the standard model of synchronous store-and-forward packet routing
networks~\cite{AKOR,AKK,AZ}. The network is modeled by a directed path over $n$
vertices. Namely, the network is a directed graph $G=(V,E)$, where
$V=\{0,\ldots,(n-1)\}$ and there is a directed edge from vertex $u$ to vertex $v$ if
$v=u+1$.  The network resources are specified by two positive integer parameters $B$
and $c$ that describe, respectively, the local buffer capacity of every vertex and
the capacity of every edge. In every time step, at most $B$ packets can be
stored in the local buffer of each vertex, and at most $c$ packets can be transmitted
along each edge.

The input consists of a set of packet requests $R=\{r_i\}_{i=1}^M$. A packet
request is specified by a $3$-tuple $r_i=(a_i,b_i,t_i)$, where $a_i\in V$ is the
\emph{source node} of the packet, $b_i\in V$ is its \emph{destination node}, and
$t_i\in \mathbb{N}$ is the release time of the packet
at vertex $a_i$. Note that $b_i>a_i$, and
 $r_i$ is ready to leave $a_i$ in time step $t_i$.

A solution is a schedule $S$. For each request $r_i$, the schedule $S$ specifies a
sequence $s_i$ of transitions that packet  $r_i$ undergoes.  A \emph{rejected}
request $r_i$ is simply discarded at time $t_i$, and no further treatment is required
(i.e., $s_i=\{\textit{reject}\}$). An \emph{accepted} request $r_i$ is delivered from
$a_i$ to $b_i$ by a sequence $s_i$ of actions, where each action is either ``store''
or ``forward''. Consider the packet of request $r_i$. Suppose that in the beginning of time step $t$ the
packet is in vertex $v$ (a packet injected at node $v$ at time  $t$ is considered to be at $v$ at the
beginning of time step $t$) .  A store action means that the packet is stored in the
buffer of $v$, consumes one buffer unit of $v$ at time step $t$,
and will still be in vertex $v$ at the beginning of time step $t+1$. A forward action
means that the packet is transmitted to vertex $v+1$, consumes the one unit of ``bandwidth'' of  the
edge between $v$ and $v+1$ at time $t$, and will be in vertex $v+1$ at the beginning of
time step $t+1$. The packet of request $r_i$ reaches its destination $b_i$ after
exactly $b_i-a_i$ forward steps. Once a packet reaches its destination, it is removed
from the network and it no longer consumes any of the network's resources.

\medskip\noindent
A schedule must satisfy the following constraints:
\begin{enumerate}
\item The \emph{buffer capacity constraint} asserts that at any time step $t$, and in
  every vertex $v$, at most $B$ packets are stored in $v$'s buffer.
\item The \emph{link capacity constraint} asserts that at any step $t$, at most $c$
  packets are transmitted along each edge.
\end{enumerate}

The \emph{throughput} of a schedule $S$ is the number of accepted requests.  We
denote the throughput of a schedule $S$ by $|S|$. As opposed to online algorithms,
there is no point in, and one can avoid,  using network resources for  a certain packet unless that packet reaches its
destination. Namely, a packet that is not rejected and does not reach its destination
only consumes network resources without any benefit. Hence, without loss of generality,
we can assume, as we do in the above definition of a schedule, that every packet that is not rejected reaches its destination.

We consider the offline optimization problem of finding a schedule that maximizes the
throughput.
By
\emph{offline} we mean that the algorithm receives all requests in
advance.\footnote{The number of requests $M$ is finite and known in the offline
  setting. This is not the case in the online setting in which the number of requests
  is not known in advance and may be unbounded.}  By \emph{centralized} we mean that
all the requests are known in one location where the algorithm is executed.  Let
$\opt(R)$ denote a schedule of maximum throughput for the set of requests $R$.  Let
$\alg(R)$ denote the schedule computed by $\alg$ on input $R$. We say that the
approximation ratio of a scheduling algorithm \alg\ is $c\geq 1$ if
$\forall R: |\alg(R)| \geq \frac 1c \cdot |\opt(R)|$.
For a randomized algorithm we say that the expected approximation ratio is $c$ if
$\forall R:  \expec{|\alg(R)|} \geq \frac 1c \cdot |\opt(R)|$.

\subparagraph*{The \maxpr\ Problem.} The problem of maximum throughput scheduling of packet requests on directed line
(\maxpr) is defined as follows. The input consists of: $n$ - the size of the
network, $B$ - node buffer capacities, $c$ - link capacities, and $M$ packet requests
$\{r_i\}_{i=1}^M$. The output is a schedule $S$. The goal is to maximize the
throughput of $S$.

\subsection{Path Packing in a uni-directed 2D-Grid}
\label{sec:prelim}
In this section we define a problem of maximum-cardinality path packing in a
two-dimensional uni-directed grid (\maxgrid). This problem is equivalent to  
\maxpr, and was used for that purpose
in previous work, where the formal reduction is also presented~\cite{AAF,ARSU,AZ,RR}.
For completeness, this reduction is given  in Appendix~\ref{sec:reduction}.
As the two problems  are
equivalent, we use  in the sequel terminology from both problems interchangeably.

The grid, denoted by $G^{st}=(V^{st},E^{st})$, is an infinite directed acyclic graph.
The vertex set $V^{st}$ equals $V\times\mathbb{N}$, where $V=\{0,1,\ldots, (n-1)\}$.
Note that we use the first coordinate (that corresponds to vertices in $V$) for the
$y$-axis and the second coordinate (that corresponds to time steps) for the $x$-axis (See Figure~\ref{fig:st} in Appendix~\ref{sec:reduction}).
The edge set consists of horizontal edges (also called store edges) directed to the
right and vertical edges (also called forward edges) directed upwards.  The capacity
of vertical edges is $c$ and the capacity of horizontal edges is $B$. We often refer
to $G^{st}$ as the space-time grid (in short, grid) because the $x$-axis is related to time and the
$y$-axis corresponds to the vertices in $V$.

A \emph{path request} in the grid is a tuple $r^{st}=(a_i,t_i,b_i)$, where $a_i,b_i\in V$
and $t_i\in \mathbb{N}$. The request is for a path that starts in node $(a_i,t_i)$
and ends in any node in the row of $b_i$ (i.e., the end of the path can be any node
$(b_i,t)$, where $t\geq t_i$).

A \emph{packing} is a set of paths $S^{st}$ that abides by the capacity constraints: For
every grid edge $e$, the number of paths in $S^{st}$ that contain $e$ is not greater
than the capacity of $e$.

Given a set of path requests $R^{st}=\{r^{st}_i\}_{i=1}^M$, the goal in the \maxgrid\
problem is to find a packing $S^{st}$ with the largest cardinality. (Each path in
$S^{st}$ serves a distinct path request.)

\subparagraph*{Multi-Commodity Flows (MCFs).}
Our use of path packing problems
gives rise to \emph{fractional} relaxations of that problem, namely to multi-commodity flows (MCFs) with unit
demands on uni-directional grids. The definitions and terminology of MCF's appear in Appendix~\ref{append:MCF}.

\subsection{Tiling, Classification, and Sketch Graphs}
\label{sec:def_tilings}

To define our algorithm we make use of partitions of the space-time grid described above
 into sub-grids. We define here the notions we use for this purpose.

\subparagraph*{Tiling.} \emph{Tiling} is a partitioning of the two-dimensional space-time
grid (in short, grid) into squares, called \emph{tiles}. Two parameters specify a
tiling: the side length $k$, an even integer,  of the squares, and the shifting $(\varphi_x,\varphi_y)$
of the squares.  The shifting refers to the $x$- and $y$-coordinates of the bottom
left corner of the tiles modulo $k$.  Thus, the tile $T_{i,j}$ is the subset of the
grid vertices defined by
\begin{align*}
  T_{i,j} &\triangleq \{ (v,t) \in V\times \mathbb{N} \mid ik  \leq
  v-\varphi_y < (i+1)k \text{ and } jk  \leq t -\varphi_x< (j+1)k\},
\end{align*}
where $\varphi_x$ and $\varphi_y$ denote the horizontal and vertical shifting, respectively.
We consider two possible shifts for each axis, namely, $\varphi_x,\varphi_y \in \{0,k/2\}$.

\subparagraph*{Quadrants and Classification.} Consider a tile $T$. Let $(x',y')$ denote the
lower left corner (i.e., south-west corner) of $T$. The \emph{south-west quadrant} of
$T$ is the set of vertices $(x,y)$ such that $x'\leq x < x'+k/2$ and $y'\leq y
< y'+k/2$.

For every vertex $(x,y)$ in the grid, there exists exactly one shifting
$(\varphi_x,\varphi_y)\in \{0,k/2\}^2$ such that $(x,y)$ falls in the south-west (SW)
quadrant of a tile.
Fix the tile side length $k$. We define a \emph{class} of requests for every shifting
$(\varphi_x,\varphi_y)$.  The class that corresponds to the shifting
$(\varphi_x,\varphi_y)$ consists of all the path requests $r^{st}_i$ whose origin
$(a_i,t_i)$ belongs to a SW quadrant of a tile in the tiling that uses the shifting
$(\varphi_x,\varphi_y)$.

\subparagraph*{Sketch graph and paths.} Consider a fixed tiling.  The \emph{sketch graph}
is the graph obtained from the grid after coalescing each tile into a single node.
There is a directed edge $(s_1,s_2)$ between two tiles $s_1,s_2$ in the sketch graph
if there is a directed edge $(\alpha,\beta)\in E^{st}$ such that $\alpha\in s_1$ and
$\beta\in s_2$.
Let $p^s$ denote the projection of a path $p$ in the grid to the sketch graph.  We
refer to $p^s$ as the \emph{sketch path} corresponding to $p$. Note that the length
of $p^s$ is at most $\ceil{|p|/k}+1$.

\section{Outline of our Algorithm}

Packet requests are categorized into four categories: very short, short, medium, and long, according to the source-destination distance of each packet. A
separate approximation algorithm is executed for each category.  The algorithm
returns a highest throughput solution among the solutions computed for the four
categories.

\subparagraph*{Notation.}
Three thresholds are used for defining very short, short, medium, and long requests:
$\medium  \triangleq 3\ln n, \short \triangleq 3\cdot \ln (\medium) = 3\cdot \ln (3\ln n)$, $\vshort \triangleq  3\cdot \ln (\short)= 3 \cdot \ln(3\cdot \ln (3\ln n))$.
\begin{definition}
A request $r_i$ is a \emph{very short} request if $b_i-a_i \leq \vshort$.
  A request $r_i$ is a \emph{short} request if $\vshort<b_i-a_i \leq \short$.  A request
  $r_i$ is a \emph{medium} request if $\short <b_i-a_i \leq \medium$.  A request
  $r_i$ is a \emph{long} request if $\medium < b_i-a_i $.
\end{definition}

We use a deterministic algorithm for the class of very short packets, and
in Lemma~\ref{lemma:vshort} we prove that this deterministic algorithm
achieves a constant approximation ratio. We use a randomized algorithm for
each of the classes of short, medium and long packets;  in Theorem~\ref{thm:final_randomized} we prove that this randomized algorithm achieves
a constant approximation ratio in expectation for each of these classes when $B=\Theta(c)$. Thus, we obtain the following:
\begin{theorem}[Main Result]
\label{th:main_result}
  There exists a randomized approximation algorithm for the \maxpr\
  problem that, when  $B=\Theta(c)$, achieves a constant approximation ratio in expectation.
\end{theorem}

\section{Approximation Algorithm for Very Short Packets}
In this section we present a constant ratio deterministic approximation algorithm for
very short packets for the case of $B=\Theta(c)$.  This algorithm, which is key to achieving the results of the present paper,
makes use  of a new combinatorial lemma that we prove in the next subsection, stating,
   roughly speaking,
   that if packets from a given set of packets
   are allowed to stay put in buffers  (i.e., use horizontal edges in the grid)
   only a limited number of time
   steps, $2d$ (where $d$ is the longest source-destination distance in the set of packets),
   then the
   optimal solution is decreased by only a bounded factor.
   We believe that this lemma may find additional applications in future work on routing and scheduling problems.

\subsection{Bounding Path Lengths in the Grid}
In this section we prove that bounding, from above,  the number of horizontal
edges along a path incurs only a small reduction in the throughput.
Previously known  bounds along these lines hold only for fractional solutions~\cite{AZ},
while we  prove here such claim for integral schedules.
Another similar variant is given  by Kleinberg and Tardos~\cite{KT}, where virtual circuit routing over \emph{undirected} grids is studied. It is proven in~\cite{KT} that a restricted integral optimal routing is a constant approximation to the unrestricted one:  routing requests have  origins and destinations within a subgrid of $2d$ by $2d$, and the restriction on the paths is that the routing is done within its supergrid of $4d$ by $4d$. As we will see below our lemma  holds w.r.t.~\emph{directed} graphs, that is, the lemma by~\cite{KT} does not handle our case.

Let $R_d$ denote a set of packet requests $r_i$, $i \geq 1$,  such that $b_i-a_i \leq d$ for any $i$.  Consider
the paths in the space-time grid that are allocated to the accepted requests in an optimal solution. We
prove that restricting the path lengths to $2d$ decreases both the optimal fractional
and  the optimal  \emph{integral} throughput  only by  a multiplicative factor of
$O(c/B)$. We note that if $B=\Theta(c)$, then we are guaranteed an optimal solution
which is only a 
constant fraction away from the unrestricted optimal solution.

\subparagraph*{Notation.}
For a single commodity acyclic flow $f_i$, let $\pmax(f_i)$ denote the diameter of
the support of $f_i$ (i.e., length of longest path\footnote{Without loss of
  generality, we may assume that each single commodity flow $f_i$ is acyclic.}).  For
  an MCF $F=\{f_i\}_{i\in I}$,  where $I$ is the set of flows, let $\pmax(F) \triangleq \max_{i\in I} \pmax(f_i)$.
  Let $\optfrac(R)$ (respectively, $\optint(R)$) denote a maximum throughput
  fractional (resp., integral) MCF with respect to the set of requests $R$.
  Similarly, let $\optfrac(R\mid \pmax < d')$ (respectively, $\optint(R\mid
  \pmax<d')$) denote a maximum throughput fractional (resp., integral) MCF with
  respect to the set of requests $R$ subject to the additional constraint that the
  maximum path length is at most $d'$. See Appendix~\ref{append:MCF} for further MCF terminology.

\begin{lemma}\label{lemma:bounded path length}
  \begin{align*}
    \optfrac(R_d \mid \pmax \leq 2d) &\geq \frac {c}{B+2c} \cdot \optfrac(R_d), \\
    \optint(R_d \mid \pmax \leq 2d) &\geq \frac {c}{2(B+c)} \cdot \optint(R_d).
  \end{align*}
\end{lemma}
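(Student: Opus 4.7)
The plan is to prove both inequalities via a common randomized construction: partition the time axis of the space-time grid into slabs of length $2d$ via a random offset, and keep from an optimal solution only the paths contained in a single slab.

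First I would fix an optimal solution---$F^\ast = \optfrac(R_d)$ for the fractional inequality, $I^\ast = \optint(R_d)$ for the integral one---together with a path decomposition $\{(p, x_p)\}$ of its support. Then I would pick $\tau$ uniformly in $\{0, 1, \ldots, 2d-1\}$ and cut the time axis at the columns $T_k = \tau + 2kd$, $k \in \mathbb{Z}$, inducing slabs $S_k = [T_k, T_{k+1})$ of length $2d$. A path $p$ starting at column $t_i$ with length $\ell_p$ is \emph{intact} iff no cut column lies in $(t_i, t_i + \ell_p]$, equivalently iff $p$ is contained in a single slab. Every intact path automatically satisfies $\pmax \le 2d$, and the restriction of the optimum to intact paths is feasible since its per-edge capacity usage is dominated by that of the optimum. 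Since $\tau$ is uniform on $2d$ values, a direct calculation yields $\Pr[p \text{ intact}] = \max\bigl\{0,\,(2d - \ell_p)/(2d)\bigr\}$, so the expected intact throughput equals $\sum_{p : \ell_p \le 2d} x_p \cdot (2d - \ell_p)/(2d)$.

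The core of the proof is then to lower-bound this expectation by $\frac{c}{B+2c} \cdot |F^\ast|$ (fractional) or $\frac{c}{2(B+c)} \cdot |I^\ast|$ (integral). The key combinatorial observation is that every \emph{long} path---one with $\ell_p > 2d$---uses strictly more than $d$ horizontal (store) edges: writing $\ell_p = (b_i - a_i) + h_p$ with $h_p$ the store count and using $b_i - a_i \le d$, we get $h_p > d$. Hence the total flow on long paths is at most the total horizontal-edge capacity divided by $d$, while the overall throughput is bounded by the total vertical-edge capacity since each accepted request uses at least $b_i - a_i \ge 1$ forward edges. Combining these two capacity budgets with the random-shift expectation should yield a balancing inequality in which $B$ and $c$ enter in the proportions required for the denominator $B + 2c$, after the dependence on the active time range and on $n$ cancels.

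The main obstacle is to make the capacity accounting tight enough to extract exactly $c/(B+2c)$ rather than a weaker bound that degrades with $d$ and $n$. A naive ``discard everything that is not intact'' argument gives only $|F^\ast| - \frac{1}{2d}\sum_p x_p \ell_p$, and crudely bounding $\sum_p x_p \ell_p$ against the total edge capacity loses extra factors. The right analysis must handle the horizontal and vertical components of $\sum_p x_p \ell_p$ separately, matching each against its own budget, or else compensate for discarded long paths by rerouting them onto short direct alternatives within intact slabs. The gap between the two constants---$c/(B+2c)$ in the fractional case versus $c/(2(B+c))$ in the integral case---reflects an additional rounding loss of up to a factor of $2$: an integral path must be kept or dropped as a whole, so a derandomization of the random-shift argument can only pick a single concrete $\tau$ and in the worst case halves the fractional expectation.
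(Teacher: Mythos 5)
Your approach (random time shift, slabs of width $2d$, keep only paths lying entirely in one slab) is genuinely different from the paper's, and it has a gap that you flag yourself but do not close. The paper does not discard non-intact paths at all: it partitions into slabs of width $d$ (not $2d$), and for each path $p_\ell$ builds a \emph{replacement} $q_\ell$ consisting of the prefix of $p_\ell$ up to the first slab boundary it meets, followed by a forward-only (vertical) segment from that boundary vertex down to the destination row. The prefix lives in a single slab of width $d$ and the suffix has at most $d$ forward edges, so $|q_\ell|\leq 2d$; and this rerouting preserves throughput exactly. The only remaining issue is that $\{q_\ell\}$ may overload forward edges. Fractionally, the prefix-flow on a forward edge is at most $c$ and the suffix-flow is at most $B+c$ (all suffix flow through a given edge emanates from the same boundary vertex, whose incoming capacity is $B+c$), so scaling by $c/(B+2c)$ restores feasibility. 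Integrally, one keeps only requests originating in even-indexed slabs (factor $1/2$), so surviving prefixes and suffixes never share a slab, and then prunes so at most $c$ surviving paths pass each boundary vertex (factor $c/(B+c)$).

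The concrete gap in your plan: the expected intact throughput is $\sum_p x_p\max\{0, 1-\ell_p/(2d)\}$, and this can be an arbitrarily small fraction of $|F^\ast|$ (or $|I^\ast|$); in particular it is zero whenever the optimum uses only paths of length at least $2d$, which the lemma's hypothesis does not forbid. Your proposed capacity-budget argument compares the flow carried on long paths against the aggregate horizontal-edge capacity of the space-time grid, a quantity that grows without bound with the time horizon and bears no fixed relation to $|F^\ast|$; so it cannot extract a constant, as you acknowledge. The mention of ``rerouting onto short direct alternatives'' is precisely the missing step and is the entire content of the paper's proof, not an optional refinement. Finally, your explanation of the factor-$2$ gap between the two constants as a derandomization loss does not match the paper: the paper's construction is fully deterministic, and the extra $1/2$ in the integral bound comes from the even/odd slab alternation needed so that prefixes and suffixes never compete for the same edge once fractional scaling is no longer available.
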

\begin{proof}
  Partition the space-time grid into \emph{slabs} $S_j$ of ``width'' $d$. Slab $S_j$
  contains the vertices $(v,k)$, where $k\in [(j-1)\cdot d,j\cdot d]$, $j \geq 1$. We refer to
  vertices of the form $(v,jd)$ as the \emph{boundary} of $S_j$.  Note that if
  $v-u\leq d$, then the forward-only vertical path from $(u,jd)$ to $(v,jd+(v-u))$ is
  contained in slab $S_{j+1}$.

    We begin with the fractional case. Let $f^*=\optfrac(R_d)$ denote an optimal
  fractional solution for $R_d$.  Consider request $r_i$ and the corresponding single
  commodity flow $f^*_i$ in $f^*$. Decompose $f^*_i$ to flow-paths $\{p_\ell\}_\ell$.
  For each flow-path $p_\ell$ in $f^*_i$, let $p'_\ell$ denote the prefix of
  $p_\ell$ till it reaches the boundary of a slab. Note that $p'_\ell=p_\ell$ if
  $p_\ell$ is confined to a single slab. If $p'_\ell \subsetneq p_\ell$, then let
  $(v,jd)$ denote the last vertex of $p'_\ell$. Namely, the path $p'_\ell$ begins in
  $(a_i,t_i)\in S_j$ and ends in $(v,jd)$. Let $q''_\ell$ denote the forward-only path
  from $(v,jd)$ to $(b_i,jd+(b_i-v))$. (If $p'_\ell=p_\ell$, then $q''_\ell$ is
  an empty path.)  Note that $q''_\ell$ is confined to the slab $S_{j+1}$.  We refer
  to the vertex $(v,jd)$ in the intersection of $p'_\ell$ and $q''_\ell$ as the
  \emph{boundary} vertex.  Let $g_i$ denote the fractional single commodity flow
  for request $r_i$ obtained by adding the concatenated flow-paths $q_\ell\triangleq
  p'_\ell\circ q''_\ell$ each with the flow amount of $f^*_i$ along $p_\ell$.  Define
  the MCF $g$ by $g(e)\triangleq \sum_{i\in I} g_i(e)$. For every edge $e$, part of
  the flow $g(e)$ is due to prefixes $p'_\ell$, and the remaining flow is due to
  suffixes $q''_\ell$.  We denote the part due to prefixes by $g_{pre}(e)$ and refer
  to it as the prefix-flow. We denote the part due to suffixes by $g_{suf}(e)$ and
  refer to it as the suffix-flow. By definition, $g(e)=g_{pre}(e)+g_{suf}(e)$.

  The support of $g_i$ is contained in the union of two consecutive slabs. Hence, the
  diameter of the support of $g_i$ is bounded by $2d$. Hence $\pmax(g)\leq 2d$.

  Clearly, $|g_i|=|f^*_i|$ and hence $|g|=|f^*|$.  Set $\rho=c/(B+2c)$. To complete
  the proof, it suffices to prove that $\rho \cdot g$ satisfies the capacity
  constraints.  Indeed, for a ``store'' edge $e=(v,t)\rightarrow (v,t+1)$, we have
  $g_{suf}(e)=0$ and $g_{pre}(e)\leq f^*(e)\leq B$. For a ``forward'' edge
  $e=(v,t)\rightarrow (v+1,t+1)$ we have $g_{pre}(e) \leq f^*(e)\leq c$.  On the
  other hand, $g_{suf}(e)\leq B+c$.  The reason is as follows.  All the suffix-flows
  along $e$ start in the same boundary vertex $(u,jd)$ below $e$.  The amount
  of flow forwarded by $(u,jd)$ is bounded from above by the amount of incoming flow, which is
  bounded by $B+c$. This completes the proof of the fractional case.

  We now prove the integral case. The proof is a variation of the proof for the
  fractional case in which the supports of pre-flows and suffix-flows are disjoint.
  Namely, one alternates between slabs that support prefix-flows and slabs that
  support suffix-flows.

  In the integral case, each accepted request $r_i$ is allocated a single path $p_i$,
  and the allocated paths satisfy the capacity constraints. As in the fractional
  case, let $q_i\triangleq p'_i\circ q''_i$, where $p'_i$ is the prefix of $p_i$ till
  a boundary vertex $(v,jd)$, and $q''_i$ is a forward-only path.  We need to prove
  that there exists a subset of at least $c/(2(B+c))$ of the paths $\{q_i\}_i$ that
  satisfy the capacity constraints. This subset is constructed in two steps.

  First, partition the requests into ``even'' and ``odd'' requests according to the
  parity of the slab that contains their origin $(a_i,t_i)$.  (The parity of request
  $r_i$ is simply the parity of $\ceil{t_i/d}$.)  Pick a part that has at least half
  of the accepted requests in $\optint(R_d)$; assume w.l.o.g. that such a part is the
  part of the even slabs. Then, we only keep accepted
  requests whose origin belong to even slabs.

  In the second step, we consider all boundary vertices $(v,j\cdot d)$. For each
  boundary vertex, we keep up to $c$ paths that traverse it, and delete the
  remaining paths if such paths exist.  In the second step, again, at least a $c/(B+c)$ fraction
   of
  the paths survive.  It follows that altogether at least $c/(2(B+c))$ of the paths
  survive.

  We claim that the remaining paths satisfy the capacity constraints.  Note that
  prefixes are restricted to even slabs, and suffixes are restricted to odd slabs.
  Thus, intersections, if any, are between two prefixes or two suffixes.  Prefixes
  satisfy the capacity constraints because they are prefixes of $\optint(R_d)$.
  Suffixes satisfy the capacity constraints because if two suffixes intersect, then
  they start in the same boundary vertex.  However, at most $c$ paths emanating from
  every boundary vertex survive.  Hence, the surviving paths satisfy the capacity
  constraints, as required. This completes the proof of the lemma.
\end{proof}

We note that if $B=\Theta(c)$, then Lemma~\ref{lemma:bounded path length}
guarantees a restricted  optimal solution which is only a  constant fraction away from the unrestricted optimal solution.

\subsection{The Algorithm for Very Short Packets}
 In this section we present a deterministic approximation algorithm for
very short packets, whose approximation ratio is constant when $B=\Theta(c)$.

The very short requests are  partitioned into four classes, defined as follows. Consider
four tilings each with side length $k\triangleq 4\vshort$  and horizontal and vertical
shifts in $\varphi_x,\varphi_y\in \{0,k/2\}$.
The four possible shifts define four
classes: The packets of a certain class (shift) are the packets whose source nodes reside
in the SW quadrants of the tiles according to a given shift.  Observe that each packet
request belongs to exactly one class. We
say that a path $p_i$ from $(a_i,t_i)$ to the row of $b_i$ is \emph{confined to a
  tile} if $p_i$ is contained in one tile.

It is now possible to efficiently compute a constant approximation throughput
solution for each class, under the restriction that each path is of length at most $2\vshort$. Note
that this restriction means that those paths are confined to the tile that contains the origin of the path, thus this  solution
can be computed for each tile separately. On the other hand, by
Lemma~\ref{lemma:bounded path length},
 this restriction reduces
 the cardinality of
the optimal solution compared to the unrestricted optimal solution for that class by only a constant factor,
when $B=\Theta(c)$.
The algorithm computes a constant approximation (bounded path length) solution for each class, and
returns a highest throughput solution among the four solutions.

The polynomial deterministic brute force algorithm that we use is essentially the same as the one used in~\cite{RR} for a similar situation.
For completeness, below we
state it and prove its polynomial running time.

\begin{lemma}{\cite[Lemma 7]{RR}}\label{lemma:vshort}
Consider a tile $T$ of dimensions $k\times k$, for $k=4 \vshort$.
Given a set of requests $(a_i,b_i,t_i)$ all with source node in the SW quadrant of T and such that $b_i - a_i \leq \vshort$,
 consider the optimal solution for this set when all paths are restricted to length at most $2\vshort$.
There is a polynomial-time deterministic algorithm that finds a constant approximation to that optimal solution.
\end{lemma}

\begin{proof}
   The constant factor approximation algorithm for a tile uses the following
  brute-force approach.
  First observe that since  $b_i - a_i \leq \vshort$ and the paths are restricted to length at most $2\vshort$, then all paths of
  the optimal solution in question are confined to $T$.
  Define $P$ to be the set of all  paths connecting two end-points in $T$.
  Since $T$ is of size $k \times k$  there are
  at most
  $|P|=O(k^2\cdot 2k \cdot 2^{2k})$ paths
connecting end-points in $T$ (this is an overestimation). Define a candidate solution to be a choice of
the number of messages along every path in $P$; note that each of these numbers is bounded from above by
$\min\{c,M\}$. Observe that
one can in polynomial time check if a candidate solution is feasible.\footnote{There are two checks
to be done: (1) whether the
candidate solution does not violate capacities; and (2) whether the candidate solution
is coherent with the set of input packets.
The first check can be done by going over all edges in the tile, and for each such edge summing the numbers
associated with all the
paths that go through that edge, checking that this sum does not exceed the capacity of that edge. The second check can
be done by going over the $|P|$ paths: each such path serves a well defined request since it departs from a given
grid-node $(a,t)$, and
reaches a row $b$, thus it serves a request $(a,b,t)$. For each path serving requests $(a,b,t)$, check that
 the number associated with that path does not exceed the number of requests $(a,b,t)$ in the input.}

The basic idea is that the brute-force algorithm generates all candidate
solutions, checks them for feasibility, and chooses the best one among the feasible ones.
However, this may still result in a too large number of candidate solutions to check. Therefore, the algorithm only generates
candidate solutions where for all paths the number of messages along that path is a power of $2$, or $0$.
This only decreases the value of the  best candidate solution by a
factor of at most $2$. Thus, the number of candidate solutions checked is at most $O((\log M)^{|P|})$.
Since $|P|=O(k^2\cdot 2k \cdot 2^{2k})$ and $k=4\vshort=O(\log \log \log  n)$ the number of candidate solutions
checked is polynomial
in $n$ and $M$.\footnote{Let $s=M+n$. The number of candidate solutions checked is
$O((\log M)^{|P|}) = O((\log s)^{2^{O(\log \log \log s)}})=O(2^{(\log \log s)^{O(1)}})=o(s^\alpha)$, for any constant $\alpha$.}
\end{proof}

\section{Approximation Algorithm for Short, Medium and Long Requests}
\label{sec:randomized_algo}

In this section we give a randomized algorithm that will be used for the three classes of short, medium, and long requests.
Let $\gamma=\min\{B,c\}$.
When run on a given class (among short, medium, long requests), the algorithm given in this section produces an integral solution
with expected cardinality  at most a constant
factor away from the optimum {\em fractional} solution (for the same class) on a network with both edge  and buffer
capacities equal to  $\gamma$. Observe that when moving from the original network with capacities $B$ and $c$ to
a network with capacities
$\gamma$, the {\em fractional} optimum looses a factor of at most $\max\{B/\gamma,c/\gamma\}=\max\{B/c,c/B\}$. Thus, the algorithm of this section is
an $O(\max\{B/c,c/B\})$-approximation algorithm with respect to
 the fractional   optimum solution for each class, and hence also with respect to the integral optimum of  each class.
When $B=\Theta(c)$ we thus get a randomized algorithm with expected constant approximation for each of the three classes
treated in this section.

\subparagraph*{Notation.}
Let $R_{\dmin,\dmax}$ denote the set of packet requests whose
source-to-destination distance is greater than $\dmin $ and at most $\dmax$.
Formally, $
  R_{\dmin,\dmax}  \triangleq \{ r_i \mid \dmin < b_i-a_i\leq \dmax\}.
$
\subparagraph*{Parametrization.}
When applied to medium requests we use the parameter $\dmax=\medium$ and
$\dmin=\short$. When applied to long requests the parameters are $\dmax=n$ and
$\dmin=\medium$. Note that these parameters satisfy $\dmin = 3\cdot \ln \dmax$.

\subparagraph*{Chernoff Bound.}
We use the following Chernoff bound in the analysis of the algorithm for short, medium and long requests.
\begin{definition}\label{def:beta}
  The function $\beta:(-1,\infty) \rightarrow \mathbb{R}$ is defined by $\beta(\eps)\eqdf (1+\eps)\ln
  (1+\eps) - \eps$.
\end{definition}

\begin{theorem}[Chernoff Bound~\cite{raghavan1986randomized,young1995randomized}]
\label{thm:Chernoff}
Let $\{X_i\}_i$ denote a sequence of independent random variables attaining values in $[0,1]$.
Assume that $\expec{X_i}\leq \mu_i$. Let $X\eqdf \sum_i X_i$ and $\mu\eqdf\sum_i \mu_i$.
Then, for $\eps >0$,
\begin{align*}
  \prbig{X\geq (1+\eps)\cdot \mu} &\leq e^{-\beta(\eps)\cdot \mu}.
\end{align*}
\end{theorem}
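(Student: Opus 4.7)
The plan is to execute the standard exponential-moment (Bernstein-style) Chernoff argument: apply Markov's inequality to $e^{tX}$ for a parameter $t>0$, exploit independence to factor the MGF, bound each factor using convexity of $e^{tx}$ on $[0,1]$, and finally optimize $t$. No structural insight beyond the textbook proof is needed; the statement is essentially the Raghavan--Young form.

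First, I would fix an arbitrary $t>0$ and apply Markov's inequality to the nonnegative random variable $e^{tX}$:
\[
\prbig{X \geq (1+\eps)\mu} \;=\; \prbig{e^{tX} \geq e^{t(1+\eps)\mu}} \;\leq\; \frac{\expec{e^{tX}}}{e^{t(1+\eps)\mu}}.
\]
By independence of the $X_i$'s, the moment generating function factorizes as $\expec{e^{tX}}=\prod_i \expec{e^{tX_i}}$.

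Next I would bound each factor. Since $X_i\in[0,1]$ and the map $x\mapsto e^{tx}$ is convex, the chord inequality gives $e^{tX_i}\leq 1+X_i(e^t-1)$ pointwise. Taking expectations and using $\expec{X_i}\leq \mu_i$ together with $1+z\leq e^z$:
\[
\expec{e^{tX_i}} \;\leq\; 1+\mu_i(e^t-1) \;\leq\; e^{\mu_i(e^t-1)}.
\]
Here the hypothesis $\expec{X_i}\leq\mu_i$ (rather than equality) is harmless because $1+\mu(e^t-1)$ is monotone increasing in $\mu$ when $t\geq 0$. Taking the product over $i$ yields $\expec{e^{tX}}\leq e^{\mu(e^t-1)}$, and substitution into the Markov bound gives
\[
\prbig{X \geq (1+\eps)\mu} \;\leq\; \exp\!\bigl(\mu(e^t-1) - t(1+\eps)\mu\bigr).
\]

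Finally, I would optimize over $t$. Differentiating the exponent $\mu(e^t-1)-t(1+\eps)\mu$ in $t$ and setting the derivative to zero gives $e^t=1+\eps$, i.e. $t=\ln(1+\eps)$, which is positive since $\eps>0$ (so the use of Markov was legitimate). Plugging this choice back into the exponent gives
\[
\mu(e^t-1) - t(1+\eps)\mu \;=\; \mu\bigl(\eps - (1+\eps)\ln(1+\eps)\bigr) \;=\; -\beta(\eps)\cdot \mu,
\]
which is exactly the claimed tail bound. The only place requiring care is the monotonicity remark that allows replacing $\expec{X_i}$ by the upper bound $\mu_i$ inside the MGF estimate; everything else is routine calculus and independence.
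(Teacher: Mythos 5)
Your proof is correct: the Markov bound on $e^{tX}$, factorization by independence, the convexity (chord) estimate $e^{tX_i}\leq 1+X_i(e^t-1)$ on $[0,1]$, the monotonicity remark that lets you replace $\expec{X_i}$ by $\mu_i$, and the choice $t=\ln(1+\eps)$ giving exponent $-\beta(\eps)\mu$ are all in order. The paper does not prove this theorem at all --- it is quoted from Raghavan and Young --- and your argument is precisely the standard exponential-moment proof given in those cited sources, so there is nothing further to compare.
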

\begin{corollary}\label{coro:Chernoff}
  Under the same conditions as in Theorem~\ref{thm:Chernoff},
\begin{align*}
  \prbig{X\geq \alpha\cdot \mu} &\leq \left(\frac{e}{\alpha}\right)^{\alpha\cdot \mu}.
\end{align*}
\end{corollary}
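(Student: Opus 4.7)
The plan is to reduce the corollary to the already-stated Chernoff bound of \theoremref{thm:Chernoff} by a direct substitution and elementary algebra. Specifically, given $\alpha>0$, I set $\eps \triangleq \alpha - 1$, so that $1+\eps = \alpha$. If $\alpha \leq 1$, then $(e/\alpha)^{\alpha\mu}\geq 1$ and the bound is vacuous, so I may assume $\alpha > 1$, in which case $\eps > 0$ and \theoremref{thm:Chernoff} applies directly to yield
\[
\prbig{X \geq \alpha \cdot \mu} = \prbig{X \geq (1+\eps)\cdot \mu} \leq e^{-\beta(\eps)\cdot \mu}.
\]

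Next I expand $\beta(\eps)$ using its definition in \defref{def:beta}:
\[
\beta(\alpha - 1) \;=\; \alpha \ln \alpha \;-\; (\alpha - 1) \;=\; \alpha \ln \alpha - \alpha + 1.
\]
Substituting this into the Chernoff bound and splitting the exponent gives
\[
e^{-\beta(\alpha-1)\cdot\mu} \;=\; e^{-\mu\alpha \ln \alpha} \cdot e^{\mu\alpha} \cdot e^{-\mu} \;=\; \alpha^{-\alpha\mu}\cdot e^{\alpha\mu}\cdot e^{-\mu} \;=\; \left(\frac{e}{\alpha}\right)^{\alpha\mu} \cdot e^{-\mu}.
\]

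Finally, since $\mu \geq 0$, the factor $e^{-\mu}\leq 1$ can be dropped, producing the stated bound $(e/\alpha)^{\alpha\mu}$. The entire argument is a substitution followed by two lines of algebra; there is no genuine obstacle, and the only thing to watch is the trivial case $\alpha \leq 1$ where \theoremref{thm:Chernoff} does not apply but the claimed inequality holds vacuously.
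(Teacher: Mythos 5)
Your proof is correct: setting $\eps=\alpha-1$, applying \theoremref{thm:Chernoff}, expanding $\beta(\alpha-1)=\alpha\ln\alpha-\alpha+1$, and discarding the factor $e^{-\mu}\leq 1$ (with the vacuous case $\alpha\leq 1$ handled separately) is exactly the standard derivation that the paper leaves implicit, as it states \corollaryref{coro:Chernoff} without proof. Nothing is missing.
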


\subsection{The Algorithm for $R_{\dmin,\dmax}$}
\label{sec:algo_def}
The algorithm for $R_{\dmin,\dmax}$ proceeds as follows.
To simplify notation, we abbreviate $R_{\dmin,\dmax}$ by $R$.
The parameters $\dmin$ and $\dmax$ must satisfy that
$\dmin  =  3\cdot \ln \dmax$.
We use the randomized rounding procedure by Raghavan and Thompson~\cite{raghavan1986randomized,raghavan1987randomized}. The description of this randomized rounding procedure is deferred to Appendix~\ref{sec:RR}.
To run the following algorithms we  reduce the packet requests in $R$ to path requests $R^{st}$ in a space-time graph $\Gst$ with edge capacities $\gamma$. The following algorithms is to give a solution to the problem of routing a maximum cardinality
subset of $R^{st}$ on the graph $\Gst$.

\begin{enumerate}
\item
\label{step:old2}
On $\Gst$,
 compute a maximum throughput fractional MCF $F\triangleq\{f_i\}_{r_i\in
    R^{st}}$ with edge capacities $\tilde{c}(e)=\lambda \cdot \gamma$,  for $\lambda=1/(\beta(3) \cdot 6)$, and bounded diameter $\pmax(F)\leq 2\dmax$.  We remark that this MCF can be
  computed in time polynomial in $n$ - the number of nodes, and $M$ - the number of
  requests.\footnote{Note that the requests in $R^{st}$, as defined in Section~\ref{sec:prelim}, are from a grid node to a grid {\em row}. To be fully coherent with standard MCF terminology and notations, one would
  need to add for each row in the grid a super-node, connect all nodes on that row to this super-node with edges of capacity say, $M$, and define the MCF problem with flow requests from grid nodes to these super nodes instead of the corresponding rows. Further note that since always $\dmax\leq n$, we can consider
   a space-time grid of size at most $ n \times (M \cdot 2n)$, which can be constructed by
   going over the release times of all $M$ requests, eliminating ``unnecessary'' time steps.
  One can then compute a maximum throughput fractional solution with bounded diameter on this
   grid using linear programming.  This is true because the constraint $\pmax(f_i)\leq d'$
  is a linear constraint and can be imposed by a polynomial number of inequalities
  (i.e, polynomial in $n$ and $d'$). For example, one can construct a product network
  with $(d'+1)$ layers, and solve the MCF problem over this product graph.}
\item
\label{step:old3}
Partition $R^{st}$ into $4$ classes $\{R^{j}\}_{j=1}^4$ according to the shift of tiling that results in the source
node being in the SW quadrant of a  $k\times k$ tiling, where $k\triangleq  2\dmin = 6\ln \dmax$
 (see Section~\ref{sec:def_tilings}).  Pick a class $R^j$ such that the throughput of $F$ restricted to $R^j$
  is at least a quarter of the throughput of $F$, i.e., $|F(R^j)|\geq |F|/4$.
\item
\label{step:old4}
For each request $r_i\in R^j$, apply randomized rounding independently to
  $f_i$ (as described in Appendix~\ref{sec:RR}).
  The outcome of randomized rounding
  per request $r_i\in R^j$ is either ``reject'' or a path $p_i$ in $\Gst$. Let
  $\Rrnd \subseteq R^j$ denote the subset of requests $r_i$ that are assigned a
  path $p_i$ by the randomized rounding procedure.
\item
\label{step:old5}
Let $\Rfilter\subseteq \Rrnd$ denote the requests that remain after
  applying filtering (described in Section~\ref{sec:filtering}).
\item
\label{step:old6}
Let $\Rquad\subseteq \Rfilter$ denote the requests for which routing
  in first quadrant is successful (as described in Section~\ref{sec:quadrant}).
\item
\label{step:old7}
Complete the path of each request in $\Rquad$ by applying crossbar routing (as
  described in Section~\ref{sec:detailed}).
\end{enumerate}

\subsection{Filtering}\label{sec:filtering}
\subparagraph*{Notation.}
Let $e$ denote an edge in the space-time grid $G^{st}$.  Let $e^s$ denote an edge in
the sketch graph (see Section~\ref{sec:def_tilings}).
We view $e^s$ also as the set of edges in
$G^{st}$ that cross the tile boundary that corresponds to the sketch graph edge $e^s$.
The path $p_i$ is a random variable that denotes the path, if any, that is chosen for
request $r_i$ by the randomized rounding procedure.
For a path $p$ and an edge $e$ let $\ind{p}(e)$ denote the $0$-$1$ indicator function
that equals $1$ iff $e\in p$.

\medskip\noindent
The set of filtered requests $\Rfilter$ is defined as follows (recall that $\lambda=1/(\beta(3)\cdot 6)$).
\begin{definition}
  A request $r_i$ is $r_i\in \Rfilter$ if and only if $r_i$ is accepted by the randomized rounding
  procedure, and for every sketch-edge $e^s$ in the sketch-path $p^s_i$ (see Section~\ref{sec:def_tilings}) it holds that
      $$\sum_{j :r_j \in \Rrnd} \ind{p^s_j}(e^s) \leq 4\lambda\cdot k \cdot \gamma \:.$$
      \end{definition}

We now give a lower bound on the cardinality of the set of requests that pass the filtering stage.
\begin{claim} \label{claim:filter}
Let  $k=6\ln \dmax$.
$\expec{|\Rfilter|}\geq \left(1-O(\frac{1}{k})\right)\cdot \expec{|\Rrnd|}$.
\end{claim}

\begin{proof} We begin by bounding  from above the probability that more than $4\lambda k \gamma$ sketch
  paths cross a given sketch edge.

\begin{lemma}\label{lemma:chernoffrr}
For every edge $e^s$ in the sketch graph,\footnote{The
    $e$ in the RHS is the base of the natural logarithm.}
  \begin{align}
    \label{eqn:chernoff}
    \prbig{\sum_i \ind{p^s_i}(e^s) > 4\lambda k \gamma} \leq e^{-k/6}\:.
  \end{align}
\end{lemma}
\begin{proof}[Proof of lemma]
  Since the demand of each request is $1$, it follows that $f_i(e^s) \leq 1$, for any request $r_i$ and any sketch graph edge
  $e^s$.  Thus, for every edge $e^s$ and request $r_i$, we have
  $\expec{\ind{p^s_i}(e^s)} = \prbig{\ind{p^s_i}(e^s) = 1} =f_i(e^s) \leq 1$.
Fix a sketch edge $e^s$. The random variables $\{\ind{p^s_i}(e^s)\}_{i}$ are
independent $0$-$1$ variables.  Moreover, $\sum_i \expec{\ind{p^s_i}(e^s)} =\sum_i
f_i(e^s) =\sum_{e\in e^s} \sum_i f_i(e)\leq k \cdot \lambda \gamma$.
 By Chernoff bound~\footnote{We use the following version of Chernoff
 Bound~\cite{raghavan1986randomized,young1995randomized}.
Let $\{X_i\}_i$ denote a sequence of independent random variables attaining values in $[0,1]$.
Assume that $\expec{X_i}\leq \mu_i$. Let $X\eqdf \sum_i X_i$ and $\mu\eqdf\sum_i \mu_i$.
Then, for $\eps >0$,
$
  \prbig{X\geq (1+\eps)\cdot \mu} \leq e^{-\beta(\eps)\cdot \mu}.
$}
\begin{align*}
  \prbig{\sum_i \ind{p^s_i}(e^s) > 4\cdot \sum_i \expec{\ind{p^s_i}(e^s)}} &<
  e^{-\beta(3)\cdot k\lambda \gamma} \leq e^{-k/6}~,
\end{align*}
since $\gamma \geq 1$.
\end{proof}

  A request $r_i\in \Rrnd$ is not in $\Rfilter$ iff at least one of the edges $e^s\in p^s_i$
 has more than $4k \lambda  \gamma$ paths on it.
  Hence, by a union bound,
  \begin{align*}
    \prbig{r_i \not\in \Rfilter \mid r_i \in \Rrnd} & \leq |p^s_i| \cdot e^{-k/6} \leq \left(\ceil*{\frac{2 \dmax}{k}}+2\right) \cdot e^{-\ln \dmax}
= O\left(\frac 1k\right),
  \end{align*}
  since $k=6\ln \dmax$.
  \end{proof}

\subsection{Routing in the First Quadrant}\label{sec:quadrant}
In this section, we deal with the issue of evicting as many requests as possible
from their origin quadrant to the boundary of their origin quadrant.
\begin{remark}
  Because $k/2\leq \dmin$ every request that starts in a SW quadrant of a tile
  must reach the boundary (i.e., the extreme nodes on the top or right side) of the quadrant before it  reaches its destination.
\end{remark}

\subparagraph*{The maximum flow algorithm.}
Consider a tile $T$.  Let $X$ denote A set of requests $r_i$ whose
source $(a_i,t_i)$ is in the south-west quadrant of $T$.  We say that a subset
$X'\subseteq X$ is \emph{quadrant feasible} (in short, feasible) if it satisfies
the following condition: There exists a set of paths, creating a load of at most $\gamma$ on each edge,  $\{q_i \mid r_i\in
X'\}$, where each path $q_i$ starts in the source $(a_i,t_i)$ of $r_i$ and ends in
the top or right side of the SW quadrant of $T$.

\medskip\noindent
We employ a maximum-flow algorithm to solve the following  problem.
\begin{description}
\item[Input:] A set of requests $X$ whose source is in the SW
  quadrant of $T$.
\item[Goal:] Compute a maximum cardinality quadrant-feasible subset $X'\subseteq X$. In addition, for each request $r \in X'$,
compute a path from the source node of $r$ to a  node on  the boundary of the SW quadrant of $T$.
\end{description}

The algorithm is simply a maximum-flow algorithm over the following network, denoted
by $N(X)$. Augment the quadrant with a super source $\tilde{s}$ and a super sink
$\tilde{t}$.
The super source $\tilde{s}$ is
connected to every source $(a_i,t_i)$ of a request $r_i\in X$ with a unit capacity
directed edge. (If $\alpha$ requests share the same source, then the capacity of the
edge is $\alpha$.)  There is a $\gamma$-capacity edge from every vertex in the top side and right side
of the SW quadrant of $T$ to the super sink $\tilde{t}$.  All the grid edges are assigned $\gamma$
capacities.  Compute an integral maximum flow in the network. Decompose the flow to
unit-flow paths. These flow paths are the paths that are allocated to the requests in
$X'$.

\subparagraph*{Analysis.}
Fix a tile $T$ and let $R_T\subseteq \Rfilter$ denote the set of requests in
$\Rfilter$ whose source vertex is in the SW quadrant of $T$. Let $R'_T\subseteq R_T$
denote the maximum cardinality quadrant-feasible subset of $R_T$ as computed by the max-flow
algorithm above.
Let $\Rquad = \bigcup_T R'_T$.

\medskip\noindent We now prove the following theorem that relates the {\em expected value} of  $|R'_T|$ to
 the expected value of $|R_T|$.  Observe that it is not always true that the same relation holds for any specific $R_T$ that results from a specific  realization of randomized rounding procedure.

 \begin{theorem}{\cite{kleinberg1996approximation,RR}}
\label{thm:quadrant}
 $\expectau{ |\Rquad| } \geq 0.93\cdot \expectau{|\Rfilter|}$, where
  $\tau$  is the probability space induced by the randomized rounding procedure.
\end{theorem}
\begin{proof}
  By linearity of expectation, it suffices to prove that $\expectau{ |R'_T| } \geq
  0.93\cdot \expectau{|R_T|}$, for any given tile $T$.

The proof below will go along the following lines.
 We define a certain capacity constraint
 over rectangles in the tile; this definition makes use of the capacity of the boundary of the rectangles, and the
  number of requests having their origin within them. We define the set $\hat{R}_{T} \subseteq R_T$
  to be a set of requests based on the capacity
  constraints of the rectangles containing the origin of the requests.
  We prove that:
  (1)~The set $\hat{R}_{T}$ thus  defined is quadrant-feasible, and
  (2)~$\expectau{ |\hat{R}_T| } \geq
  0.93 \cdot \expectau{|R_T|}$.
  By the algorithm, $R'_T$ is a maximum
  cardinality (maximum flow) set, therefore, $|R'_T|\geq |\hat{R}_T|$, and the theorem follows.

  We now describe how the quadrant-feasible subset $\hat{R}_{T}$ is defined.

  Consider a
  subset $S$ of the vertices in the SW quadrant of $T$. Let $\dem_Y(S)$ denote the
  number of requests in $Y\subseteq R_T$ whose origin is in $S$. Let $\capa(S)$ denote the capacity of
  the edges, in the network $N(R_T)$, that emanate from $S$.  By the min-cut max-flow
  theorem, a set of requests $Y\subseteq R_T$ is quadrant-feasible if and only if
  $\dem_Y(S)\leq \capa(S)$ for every cut $S\cup\{\tilde{s}\}$ in the network $N(R_T)$.
  But, to establish this condition, it is not necessary to consider all the cuts. It  suffices to consider only
  axis parallel rectangles contained in the SW quadrant of $T$; a set of requests $Y\subseteq R_T$ is
  quadrant-feasible if and only if
  $\dem_Y(Z)\leq \capa(Z)$ for every  axis parallel rectangle $Z$
  contained in the SW quadrant of $T$.
   The reason is as follows. Without
  loss of generality the set $S$ is connected in the underlying undirected graph of
  the grid (i.e., consider each connected component of $S$ separately; if  the condition does not hold for
  $S$, then it does not hold for at least one
  of its connected components). Every
  connected set $S$ can be replaced by the smallest rectangle $Z(S)$ that
  contains $S$.  We claim that $\capa(S)\geq \capa(Z(S))$ and $\dem_Y(S)\leq
  \dem_Y(Z(S))$. Indeed, there is an injection from the edges in the cut of $Z(S)$ to
  the edges in the cut of $S$. For example, a vertical edge $e$ in the cut of $Z(S)$
  is mapped to the topmost edge $e'$ in the cut of $S$ that is in the column of $e$.
  Hence, $\capa(Z(S))\leq \capa(S)$. On the other hand, as $S\subseteq Z(S)$, it
  follows that $\dem_Y(S)\leq \dem_Y(Z(S))$. Hence if $\dem_Y(S)>\capa(S)$, then
  $\dem_Y(Z(S))> \capa (Z(S))$.

  We say that a rectangle $Z$ is \emph{overloaded} with respect to a set of requests $Y$  if $\dem_Y(Z)>\capa(Z)$. The set
  $\hat{R}_T \subseteq R_T$ is defined to be the set of requests such that  $r_i\in \hat{R}_T$  iff the origin
  of  $r_i$ is not included in any overloaded (with respect to $R_T$) rectangle. Namely,
$$\hat{R}_T \triangleq \{r_i\in R_T  \mid \neg\exists ~Z ~\mbox{s.t.}~ (a_i,t_i) \in Z \text{ and } Z  \text{ is overloaded with respect to $R_T$} \}\:$$
Consider a rectangle $Z$ with dimensions $x\times y$.
We wish to bound from above the probability that
  $\dem_{R_T}(Z)>\capa(Z)=\gamma\cdot(x+y)$.
  Since requests  in $R_T$ with origin in $Z$ must  exit the quadrant,  and hence must exit $Z$,
  it follows that $\dem_{R_T}(Z)$ is
  bounded from above by the number of paths in $R_T$ that cross the top side or the right side of $Z$ (note that there might be  paths
  that cross these sides, but  do not start in $Z$).  The amount of flow that emanates from $Z$ is bounded by $\lambda\cdot \gamma\cdot (x+y)$
  (the  capacities for the flow algorithms are $\lambda \cdot \gamma $ and there are $x+y$ edges in the cut). By the randomized rounding procedure,  
   for every edge $e$ and every request $i$, $\prbig{e\in p_i} = f_i(e)$.  Summing  over all the
  edges that emanate  from $Z$ and all the requests in $R_T$, the expected number of
  paths (of requests in $R_T$) which emanate from
   $Z$ equals the total flow of the requests in $R_T$ that cross the top side or the right side of $Z$. This quantity,  in turn, is
  bounded from above by the capacity $\lambda\cdot \gamma \cdot (x+y)$. As the paths of the requests are
  independent random variables, we obtain:\footnote{Using the following version of the Chernoff bound:
   $\prbig{X\geq \alpha\cdot \mu} \leq \left(\frac{e}{\alpha}\right)^{\alpha\cdot \mu}$. The $e$ in the formulae
   denotes the basis of the natural logarithm, not an edge.}

 \begin{align*}
    \prbig{\dem_{R_T}(Z)> \capa(Z)}  \leq \prbig{\sum_{i\in{R_T}} |p_i \cap \cut(Z)| >
      \gamma \cdot (x+y)} \leq ( \lambda \cdot e)^{\gamma\cdot(x+y)} \leq ( \lambda \cdot e) ^{x+y}~,
\end{align*}
       since
      $\lambda \cdot e <  1$ and $\gamma \geq 1$.

  For each $x,y$, each source $(a_i,t_i)$ is contained in at most $x\cdot y$ rectangles with dimensions
   $x\times y$. By applying a union bound, the probability that $(a_i,t_i)$ is
  contained in an overloaded rectangle is bounded from above by
\begin{align}
  \prbig{\exists \text{ overloaded rectangle } Z : (a_i,t_i)\in Z} &\leq \sum_{x=1}^{\infty}
\sum_{y=1}^{\infty} xy \cdot ( \lambda \cdot e)^{x+y} \leq \frac{(\lambda \cdot e)^2}{(1-\lambda \cdot e)^4} \leq 0.07 \label{eq:rectangles},
\end{align}
and the theorem follows.
\end{proof}

Routing inside the tiles  (see Section~\ref{sec:detailed}) requires however a certain
upper bound on the number of requests that start in a tile and emanate from each side of the SW quadrant. Namely that for each side of the SW quadrant
at most $\gamma\cdot (k/3)$ paths  that start in that quadrant  reach that side of the quadrant.
Using a simple procedure, i.e., taking the solution produced by the maximum flow algorithm above
and greedily eliminating paths, we get a solution
for which this condition holds, and with cardinality only a constant factor smaller.
\begin{corollary}\label{coro:limit}
Let $\Rquad$ be the set of quadrant-feasible paths such that at most $\gamma\cdot (k/3)$ paths reach each
side of each quadrant. Then,
   $\expectau{ |\Rquad| } \geq \Omega(1)\cdot \expectau{|\Rfilter|}$, where
  $\tau$  is the probability space induced by the randomized rounding procedure.
\end{corollary}
\begin{proof}
  The sum of the capacities of the edges emanating from a side of the quadrant is
  $\gamma\cdot(k/2)$.  Limiting the number of paths to $\gamma\cdot(k/3)$ reduces the throughput by at most a
  factor of $2/3$.
\end{proof}

\subsection{Detailed Routing}\label{sec:detailed}

In this section we deal with computing paths for requests $r_i\in \Rquad$ starting
from the boundary of the SW quadrant that contains the source $(a_i,t_i)$ till the
destination row $b_i$. These paths are concatenated to the paths computed in the
first quadrant to obtain the {\em final} paths of the accepted requests. Detailed
routing is based on the following components: (1)~The projections of both the final path
and  of the path $p_i$ on the sketch graph must coincide.  (2)~Each tile is partitioned
to quadrants and routing rules within a tile are defined. (3)~Crossbar routing within
each quadrant is applied to determine the final paths (except for routing in SW
quadrants in which paths are already assigned).

\subparagraph*{Sketch paths and routing between tiles.}
Each path $p_i$ computed by the randomized rounding procedure is projected to a
sketch path $p^s_i$ in the sketch graph. The final path $\hat{p}_i$ assigned to
request $r_i$ traverses the same sequence of tiles, namely, the projection of
$\hat{p}_i$ is also $p^s_i$.

\subparagraph*{Routing rules within a tile~\cite{icalpEvenM10}.}
Each tile is partitioned to quadrants as depicted in Figure~\ref{fig:quads}. The bold
sides (i.e., ``walls'') of the quadrants indicate that final paths may not cross
these walls. The classification of the requests ensures that source vertices of
requests reside only in SW quadrants of tiles. Final paths may not enter the SW
quadrants; they may only emanate from them. If the endpoint of a sketch path $p^s_i$
ends in tile $T$, then the path $\hat{p}_i$ must reach a copy of its destination row
$b_i$ in $T$. Reaching the destination row $b_i$ is guaranteed by having $\hat{p}_i$ reach  the
top row of the NE quadrant of $T$ (and thus it must reach the row of $b_i$ along the
way).

\begin{figure}
    \centering
\begin{subfigure}{0.36\textwidth}
\centering
\includegraphics[width=\textwidth]{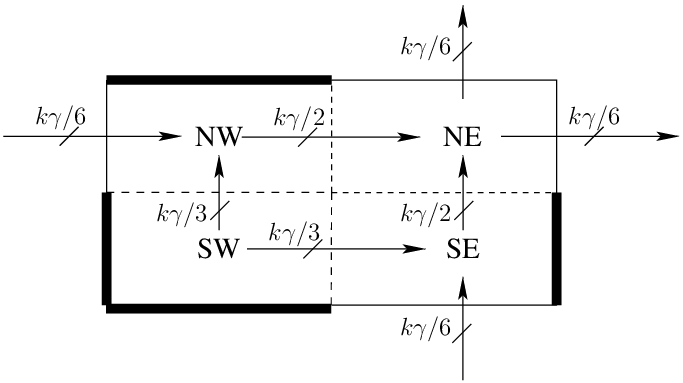}
\caption{}
\label{fig:quads}
\end{subfigure}
\quad \quad \quad \quad \quad \quad \quad
\begin{subfigure}{0.35\textwidth}
\centering
\includegraphics[width=\textwidth]{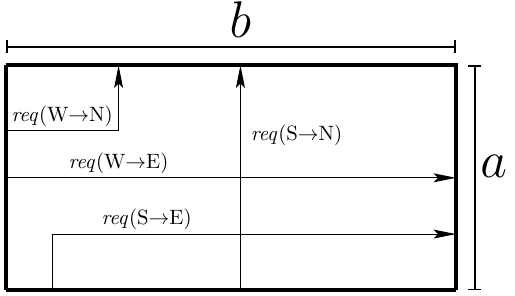}
\caption{}
\label{fig:crossbar}
\end{subfigure}
\caption{(a)~Partitioning of a tile to quadrants. Thick lines
represent ``walls'' that cannot be crosses by paths. Sources  of requests may reside
only in the SW quadrant of a tile. Maximum flow amounts crossing quadrant
sides appear next to each side. Final destinations of paths are assumed
(pessimistically) to be in the top row of the NE quadrant. (b)~4 types of requests within an $a\times b$ quadrant (in our case $a=b=k$).  
If the total flow destined to the north or east sides is at most their capacities then the detailed routing within the quadrant succeeds. 
 }
\end{figure}

\subparagraph*{Crossbar routing.~\cite{spaaEvenMP15}.} Routing in each quadrant is simply
an instance of routing in a (uni-directional) 2D grid where requests enter from two adjacent sides and
exit from the opposite sides, that is, there are 4 types of requests: for $X \in \{S, W\}$
and $Y \in \{N, E\}$, let $\rm{req}(X \rightarrow Y)$ denote the set of path requests whose entry point  to the tile is in the $X$ side and whose exit point is the $Y$ side. 
In fact, only the NE quadrant has all four types of requests cross it. 
The SE and NW quadrants have only two out of the four types. 
Figure~\ref{fig:crossbar} depicts such an
instance  of routing in one of the quadrants, in which requests arrive from the left and bottom sides and exit from the top and right sides.
To show that crossbar routing within a quadrant succeeds in our case, we use the following claim from~\cite{spaaEvenMP15}.
\begin{claim}{\cite[Proposition~5, Remark~6]{spaaEvenMP15}}\label{claim:crossbar}
  Consider a $2$-dimensional directed $a\times b$ grid with edges of uniform capacity. A set of requests can be
  routed from the bottom and left boundaries of the grid to the opposite boundaries,
  if and only if the number of requests that should exit each side is at most the total capacity of edges crossing that  side.
\end{claim}

\medskip\noindent
We conclude with the following claim.
\begin{claim}
  Detailed routing succeeds in routing all the requests in $\Rquad$ to create final paths for all requests in $\Rquad$.
\end{claim}
\begin{proof}
  The sketch graph is a directed acyclic graph.  Sort the tiles in topological
  order and within each tile, order the quadrants also in topological order: SW, NW,
  SE, NE, to get a topological order of all quadrants in the sketch graph. 
  We prove by induction on the position of the quadrant in that topological
  order that detailed routing up to and including that quadrant succeeds.
  The claim for
  all SW quadrants follows because  no path enters these quadrants from the outside and routing within these quadrants for requests with sources in these quadrants is identical to the paths computed by the randomized routing procedure. The SW quadrant of the first  tile (according to the topological order) establishes also the
  basis of the induction.
  We now note that filtering ensures that the number of paths that cross each tile boundary
  is at most $2\lambda k \gamma< k\gamma/6$,\footnote{This follows since $\beta(3)>2$, and since $\lambda = \frac{1}{\beta(3)\cdot 6}$.}
  and that the number of paths that cross each of the boundaries of each SW quadrant is at most
  $\gamma\cdot(k/3)$ (see Corollary~\ref{coro:limit}).
  Further note that for each request entering a tile on a certain boundary and having to exit  that tile on a certain other boundary, the sequence of quadrants that it has to cross within the tile is fixed. Therefore, the number of requests that enter each  quadrant on a certain quadrant-boundary  and the number of requests that have to exit  this quadrant through a certain other quadrant-boundary
is as depicted in Figure~\ref{fig:crossbar}.
  The induction
  step follows by applying Claim~\ref{claim:crossbar}.
\end{proof} 

\subsection{Approximation Ratio}

\begin{theorem}\label{thm:medium and long}

  The approximation ratio of the algorithm for packet requests in $R_{\dmin,\dmax}$,
  for $\dmin  =  3\cdot \ln \dmax$, on network of arbitrary capacity $\gamma$,
  is constant in expectation.
  \end{theorem}
  \begin{proof}

  We follow the algorithm, as defined in Section~\ref{sec:algo_def},  stage by stage.

    Stage~\ref{step:old2} computes a fractional maximum multi-commodity flow on a network with  edge capacities $\lambda \cdot \gamma$
   and with the requirement that all flows have bounded diameter of $2\dmax$.
    By Lemma~\ref{lemma:bounded path length}, bounding path lengths in the MCF
    results in a solution of at least a $1/3$ fraction of the unrestricted one, and
    the scaling of the capacities in the space-time grid from $\gamma$ to $\lambda \cdot \gamma$
    results in a solution which is at least a
    $\lambda=\Omega(1)$ fraction of the latter.

    Stage~\ref{step:old3} classifies the requests  into $4$ classes and picks only the one for which the multi-commodity flow solution is the highest, hence resulting in a solution of at least a $1/4$ fraction of the solution of the
     previous stage.

     Stage~\ref{step:old4} applies a randomized rounding procedure to the flows that are picked in stage~\ref{step:old3}.
     The expected size of the solution is equal to the total flow left from the previous stage,
      but the solution might not be feasible.

    Stage~\ref{step:old5} applies a filtering procedure to the solution of the previous stage,
     in order to get a feasible solution on the sketch graph.  By Claim~\ref{claim:filter},
     the expected size of this  solution is at least a  $1-O(1/k)$ fraction of the solution given
     by stage~\ref{step:old4}. Observe that $1-O(1/k)=\Omega(1)$ as $k =\Omega(\log \log  \log n)$ in any
      relevant invocation of the algorithm. We note that the proof of Claim~\ref{claim:filter} uses the upper
      bound $\dmax$ on the distance that each packet has to travel.

   Stage~\ref{step:old6} further reduces the size of the solution when the algorithm selects a subset of the requests that
   have survived so far, using a maximum flow algorithm applied to each SW quadrant. This is done in order to allow for the solution to be feasible in the original space-time grid ($\Gst$).  By Corollary~\ref{coro:limit}, the expected size of the solution after this stage is
  an $\Omega(1)$ fraction of  the expected size before this stage. We note that the proof leading to
   Corollary~\ref{coro:limit} uses the lower bound $\dmin$ on the distance that each packet has to travel.

  Stage~\ref{step:old7} gives the final routing without further reducing the size of the solution.

  We conclude that the algorithm (that we use for short, medium and long requests)
  is a randomized $O(1)$-approximation algorithm (in fact with respect to the fractional optimum).
\end{proof}

As explained at the top of Section~\ref{sec:randomized_algo}, given the original problem with capacities $B$ and $c$, we run the randomized algorithm on a modified network with both edge and buffer capacities equal to $\gamma=\min\{B,c\}$.
Since the optimal fractional solution on this modified network is only a $\max\{\gamma/B,\gamma/c\}$-fraction away from the
optimal solution on the original network, we have the following.
\begin{theorem}
\label{thm:final_randomized}
 The excepted approximation ratio of the algorithm for short, medium and long  packets is $O(1)$ when $B=\Theta(c)$.  \end{theorem}

The above theorem, together with Lemma~\ref{lemma:vshort}, concludes the proof of our main result as stated in Theorem~\ref{th:main_result}.

\bibliographystyle{alpha}
\bibliography{packet2}
\appendix
\section{Reduction of Packet-Routing to Path Packing}\label{sec:reduction}
\subsection{Space-Time Transformation}
\label{sec:spacetime}

A\emph{ space-time transformation} is a method to map schedules in a directed graph
over time into paths in a directed acyclic graph~\cite{AAF,ARSU,AZ,RR}.  Let
$G=(V,E)$ denote a directed graph.  The space-time transformation of $G$ is the
acyclic directed infinite graph $G^{st}=(V^{st},E^{st})$, where:
\begin{inparaenum}[(i)]
\item $V^{st} \triangleq V\times \mathbb{N}$.  We refer to every vertex $(v,t)$ as a
  \emph{copy} of $v$. Namely, each vertex has a copy for every time step. We often
  refer to the copies of $v$ as the \emph{row} of $v$.
\item $E^{st}\triangleq E_0\cup E_1$ where the set of forward edges is defined by
  $E_0\triangleq \{ (u,t)\rightarrow(v,t+1)\::\: (u,v)\in E~,~t\in\mathbb{N}\}$ and
  the set of store edges is defined by $E_1 \triangleq \{ (u,t)\rightarrow (u,t+1)
  \::\: u\in V, t\in \mathbb{N}\}$.
\item The capacity of every forward edge is $c$, and the capacity of every store edge
  is $B$.
\end{inparaenum}
Figure~\ref{fig:st} depicts the space-time graph $G^{st}$ for a directed path over
$n$ vertices.  Note that we refer to a space-time vertex as $(v,t)$ even though the
$x$-axis corresponds to time and the $y$-axis corresponds to the nodes.  We often
refer to $G^{st}$ as the space-time grid.

\subsection{Untilting}
The forward edges of the space-time graph $G^{st}$ are depicted in Fig.~\ref{fig:st}
by diagonal segments. We prefer the drawing of $G^{st}$ in which the edges are
depicted by axis-parallel segments~\cite{RR}. Indeed, the drawing is rectified by
mapping the space-time vertex $(v,t)$ to the point $(v,t-v)$ so that store edges are
horizontal and forward edges are vertical.  Untilting simplifies the definition of
tiles and the description of the routing.  Figure~\ref{fig:stuntilt} depicts the
untilted space-time graph $G^{st}$ (e.g., the node $(2,1)$ is mapped to $(2,-1)$.).

\begin{figure}
    \centering
\begin{subfigure}{0.38\textwidth}
\includegraphics[width=\textwidth]{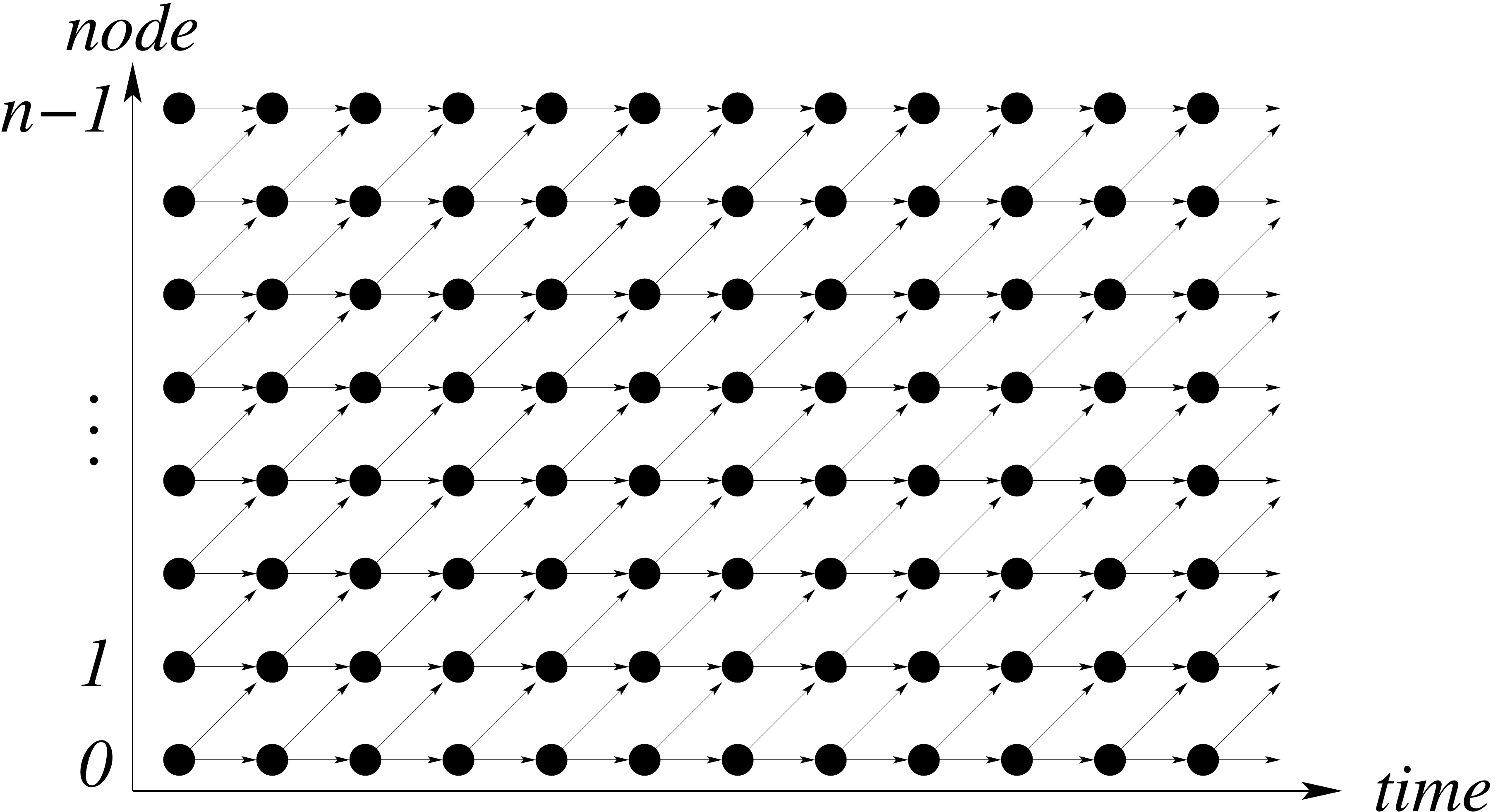}
\caption{}
\label{fig:st}
\end{subfigure}
\quad
\begin{subfigure}{0.53\textwidth}
\includegraphics[width=\textwidth]{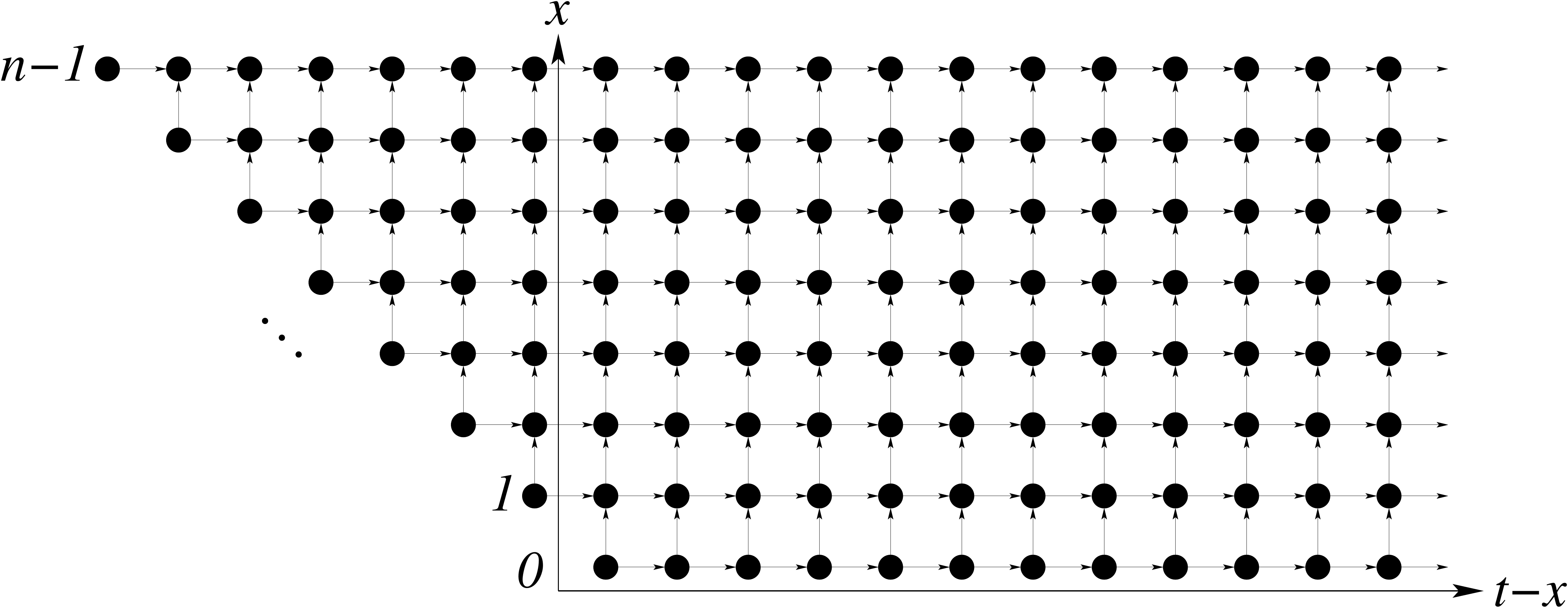}
\caption{}
\label{fig:stuntilt}
\end{subfigure}
\caption{The space-time graph $G^{st}$ before and after untilting~\cite{EM14}.}
\end{figure}

\subsection{The Reduction}
A schedule $s_i$ for a packet request $r_i$ specifies a path $p_i$ in $G^{st}$ as
follows. The path starts at $(a_i,t_i)$ and ends in a copy of $b_i$. The edges of
$p_i$ are determined by the actions in $s_i$; a store action is mapped to a store
edge, and a forward action is mapped to a forward edge. We conclude that a schedule
$S$ induces a \emph{packing} of paths such that at most $B$ paths cross every store
edge, and at most $c$ paths cross every forward edge.  Note that the length of the
path $p_i$ equals the length of the schedule $s_i$. Hence we can reduce each packet
request $r_i$ to a path request $r^{st}_i$ over the space-time graph. Vice versa, a
packing of paths $\{p_i\}_{i\in I}$, where $p_i$ begins in $(a_i,t_i)$ and ends in a
copy of $b_i$ induces a schedule.\footnote{In~\cite{AZ}, super-sinks are added to the
  space-time grid so that the destination of each path request is single vertex
  rather than a row.} We conclude that there is a one-to-one correspondence between
schedules and path packings.

\section{Multi-Commodity Flow Terminology}\label{append:MCF}
\subparagraph*{Network.}
A network $N$ is a directed graph\footnote{The graph $G$ is this section is an
  arbitrary directed graph, not a directed path. In fact, we use MCF over the space-time graph
  of the directed grid with super sinks for copies of each vertex.} $G=(V,E)$, where
edges have non-negative capacities $c(e)$.  For a vertex $u\in V$, let $\out(u)$
denote the outward neighbors, namely the set $\{y \in V \mid (u,y)\in E\}$.
Similarly, $\inn(u) \triangleq \{ x\in V \mid (x,u)\in E\}$.

\subparagraph*{Grid Network.}
A grid network $N$ is a directed graph $G=(V,E)$ where $V=[n]\times \mathbb{N}$ and
$(i,t_1)\rightarrow (j,t_2)$ is an edge in $E$ if  and only if $t_2=t_1+1$ and $0\leq j-i\leq 1$.

\subparagraph*{Commodities/Requests.} A \emph{request} $r_i$ is a pair $(a_i,b_i)$, where
$a_i\in V$ is the \emph{source} and $b_i\in V$ is the \emph{destination}. We often
refer to a request $r_i$ as \emph{commodity} $i$. The request $r_i$ is to ship
commodity $i$ from $a_i$ to $b_i$. All commodities have unit demand.

In the case of space-time grids, a request is a triple $(a_i,b_i,t_i)$ where
$a_i,b_i\in [n]$ are the source and destination, and $t_i$ is the time of arrival.
The source in the grid is the node $(a_i,t_i)$. The destination in the grid is any
copy of $b_i$, namely, any vertex $(b_i,t)$, where $t\in \mathbb{N}$ (see Section~\ref{sec:spacetime}).

\subparagraph*{Single commodity flow.}
Consider commodity $i$.  A \emph{single-commodity flow} from $a_i$ to $b_i$ is a
function $f_i:E \rightarrow \mathbb{R}^{\geq 0}$ that satisfies the following
conditions:
\begin{enumerate}[(i)]
\item Capacity constraints: for every edge $(u,v)\in E$, $0 \leq f_i(u,v)\leq c(u,v)$.
\item Flow conservation: for every vertex $u\in V\setminus\{a_i,b_i\}$
  \begin{align*}
\sum_{x\in \inn (u)} f_i(x,u) &= \sum_{y\in \out (u)} f_i(u,y).
  \end{align*}
\item Demand constraint:  $|f_i|\leq 1$ (amount of flow $|f_i|$ defined below).
\end{enumerate}
The \emph{amount} of flow delivered by the flow $f$ is defined by
\begin{align*}
  |f_i| & \triangleq \sum_{y\in \out (a_i)} f_i(a_i,y) - \sum_{x\in \inn (a_i)} f_i(x,a_i).
\end{align*}

The \emph{support} of a flow $f_i$ is the set of edges $(u,v)$ such that
$f_i(u,v)>0$. As cycles in the support of $f_i$ can be removed without decreasing
$|f_i|$,  one may assume that the support of $f_i$
is acyclic.

\subparagraph*{Multi-commodity flow (MCF).}
In a multi-commodity flow (MCF) there is a set of commodities $I$, and, for each commodity
$i\in I$, we have a source-destination pair denoted by $(a_i,b_i)$.  Consider a
sequence $F\triangleq\{f_i\}_{i\in I}$ of single-commodity flows, where each $f_i$
is a single commodity flow from the source vertex $a_i$ to the destination vertex
$b_i$.  We abuse notation, and let $F$ denote also the sum of the flows, namely $F: E
\rightarrow \mathbb{R}$, where $F(e)\triangleq \sum_{i\in I} f_i (e)$, for every edge $e$.
A sequence $F$ is a \emph{multi-commodity flow} if, in addition to the requirements defined above for each flow $f_i$, $F$ satisfies the
\emph{cumulative capacity constraints} defined by:
\begin{align*}
  \text{for every edge $(u,v)\in E$:} &~~~F(u,v) \leq c(u,v).
\end{align*}

The \emph{throughput} of an MCF $F\triangleq\{f_i\}_{i\in I}$ is defined to be
$\sum_{i\in I} |f_i|$.  In the maximum throughput MCF problem, the goal is to find an
MCF $F$ that maximized the throughput.

An MCF is called \emph{all-or-nothing}, if $|f_i|\in \{0,1\}$  for every commodity $i\in
I$.  An MCF is called \emph{unsplittable} if the support of each flow is a simple path
. An MCF is \emph{integral} if it is both all-or-nothing and
unsplittable.  An MCF that is not integral is called a \emph{fractional} MCF.

\section{Randomized Rounding Procedure}\label{sec:RR}
In this section we present material from~\cite{raghavan1987randomized} about
randomized rounding. The proof of the Chernoff bound is also based on~\cite{young1995randomized}.

Given an instance $F=\{f_i\}_{i\in I}$ of a fractional multi-commodity flow, we are
interested in finding an integral (i.e., all-or-nothing and unsplittable)
multi-commodity flow $F'=\{f'_i\}_{i\in I}$ such that the throughput of $F'$
is as close to the throughput of $F$ as possible.

\begin{observation}\label{obs:acyclic}
As flows along cycles are easy to eliminate, we
assume that the support of every flow $f_i\in F$ is acyclic.
\end{observation}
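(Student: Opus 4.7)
The plan is to show that for every single-commodity flow $f_i\in F$, one can produce a modified flow $f'_i$ with the same (or larger) flow value $|f'_i| \geq |f_i|$, with $f'_i(e)\leq f_i(e)$ on every edge, and whose support is acyclic. Applying this transformation independently to every $f_i$ yields an MCF $F'$ with throughput at least that of $F$ that still satisfies the capacity constraints (since we only decrease flows edge-wise), and whose single-commodity supports are all acyclic. This justifies the assumption without loss of generality.

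The main tool is the standard flow decomposition theorem: any single-commodity flow $f_i$ from $a_i$ to $b_i$ can be written as a non-negative combination of (i)~simple $a_i\!\to\!b_i$ paths and (ii)~simple directed cycles, where the total weight of the paths equals $|f_i|$ and each edge's total weight across the decomposition equals $f_i(e)$. I would recall this decomposition (it follows from a routine greedy argument: while the flow is non-zero, either find a path from $a_i$ to $b_i$ along edges with positive flow, or, if none exists, find a directed cycle in the support and peel off the minimum-flow edge).

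Given such a decomposition, I would define $f'_i$ as the sum of only the path components, i.e., drop all cycle components. By construction $|f'_i|=|f_i|$, flow conservation is preserved at every internal vertex, $f'_i(e)\leq f_i(e)$ for every edge $e$, and the support of $f'_i$ is contained in the union of simple $a_i\!\to\!b_i$ paths, hence acyclic (in a DAG every subgraph is acyclic, but the argument works in general directed graphs too, since any directed cycle in the support would contradict the decomposition into simple paths between distinct endpoints combined with the fact that cycle components were removed; in case a cycle survives from overlapping path components, one further peeling step removes it without decreasing $|f'_i|$).

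The step that requires the most care is ensuring that \emph{simultaneous} cycle-elimination across all commodities preserves the MCF capacity constraints. Since we replace each $f_i$ by $f'_i\leq f_i$ edgewise, the cumulative capacity constraint $F'(e)=\sum_i f'_i(e)\leq \sum_i f_i(e)=F(e)\leq c(e)$ holds automatically. Thus the transformation is compositional and does not create any new conflicts, which is exactly why the assumption in the observation is free.
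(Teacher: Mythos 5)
Your proposal is correct, and it is exactly the standard cycle-elimination argument that the paper takes for granted (the observation is stated without proof, precisely because "flows along cycles are easy to eliminate"). You correctly handle the one subtle point, namely that the union of the path components can itself contain a directed cycle, by a final cycle-cancellation step; note that one could also skip the decomposition entirely and just repeatedly subtract the minimum flow along any directed cycle in the support, which preserves $|f_i|$, conservation, and the edge-wise bound $f'_i(e)\leq f_i(e)$, so the cumulative capacity constraints of the MCF are untouched, exactly as you argue.
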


We employ a randomized procedure, called \emph{randomized rounding}, to obtain $F'$
from $F$. We emphasize that all the random variables used in the procedure are
independent.  The procedure is divided into two parts. First, we flip random coins to
decide which commodities are supplied. Next, we perform a random walk along the
support of the supplied commodities. Each such walk is a simple path along which the
supplied commodity is delivered.  We describe the two parts in details below.

\medskip
\subparagraph*{Deciding which commodities are supplied.}
For each commodity, we first decide if $|f'_i|=1$ or $|f'_i|=0$.  This decision is
made by tossing a biased coin $b_i\in\{0,1\}$ such that
\begin{align*}
  \prbig{b_i=1}&\triangleq |f_i| \leq 1.
\end{align*}
If $b_i=1$, then we decide that $|f'_i|=1$
Otherwise, if $b_i=0$, then we decide that $|f'_i|=0$.

\medskip
\subparagraph*{Assigning paths to the supplied commodities.}
For each commodity $i$ that we decided to fully supply (i.e., $b_i=1$), we assign a
simple path $P_i$ from its source $s_i$ to its destination $t_i$ by following a
random walk along the support of $f_i$. At each node, the random walk proceeds by
rolling a dice. The probabilities of the sides of the dice are proportional to the
flow amounts. A detailed description of the computation of the path $P_i$ is given in Algorithm~\ref{alg:path}.

\begin{algorithm}
  \caption{Algorithm for assigning a path $P_i$ to flow $f_i$.}
\label{alg:path}
 \begin{algorithmic}[1]
\State $P_i\gets \emptyset$.
\State $u\gets s_i$
\While {$u\neq t_i$} \Comment{did not reach $t_i$ yet}
\State $v\gets \textit{choose-next-vertex}(u)$.
\State Add  $(u,v)$ to $P_i$
\State $u\gets v$
\EndWhile
\State \textbf{return} $(P_i)$.

   \Procedure{$\textit{choose-next-vertex}$}{$u,f_i$}

   \State Let $\out (u,f_i)$ denote the set of edges in the support of $f_i$ that
   emanate from $u$.

   \State Consider a dice $C(u,f_i)$ with $|\out (u,f_i)|$ sides.
   The side corresponding to an edge $(u,v)\in \out(u,f_i)$ has probability
   $f_i{(u,v)} / (\sum_{(u,v') \in \out(u,f_i)} f_i(u,v'))$.

   \State Let $v$ denote the outcome of a random roll of the dice $C(u,f_i)$.

   \State \textbf{return $(v)$}
\EndProcedure
  \end{algorithmic}
\end{algorithm}

\medskip
\subparagraph*{Definition of $F'$.}
Each flow $f'_i\in F'$ is defined as follows. If $b_i=0$, then $f'_i$ is identically
zero. If $b_i=1$, then $f'_i$ is defined by
\begin{align*}
  f'_i(u,v) &\triangleq
  \begin{cases}
    1 & \text{if $(u,v)\in P_i$,}\\
    0& \text{otherwise.}
  \end{cases}
\end{align*}
Hence, $F'=\{f'_i \mid b_i=1\}$ is an all-or-nothing unsplittable flow, as required.

\subsection{Expected flow per edge}
The following claim can be proved by induction on the position of an edge in a
topological ordering of the support of $f_i$.
\begin{claim} \label{claim:expec}
  For every commodity $i$ and every edge $(u,v)\in E$:
  \begin{align*}
    \prbig{(u,v)\in P_i} &= f_i(u,v),\\
    \expec{f'_i(u,v)} &= f_i(u,v).
  \end{align*}
\end{claim}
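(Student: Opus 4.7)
The plan is to prove the first equality $\prbig{(u,v)\in P_i} = f_i(u,v)$ by induction on the position of edges in a topological ordering of the support of $f_i$, which is acyclic by Observation~\ref{obs:acyclic}. The second equality then follows immediately: since $f'_i(u,v)$ is a $0$-$1$ random variable equal to $1$ exactly when the rounding procedure both accepts commodity $i$ (i.e., $b_i=1$) and the sampled path $P_i$ traverses $(u,v)$, we have $\expec{f'_i(u,v)}=\prbig{(u,v)\in P_i}$.

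Introduce the auxiliary quantity $q_i(u)\triangleq \prbig{u\in P_i}$, the probability that both $b_i=1$ and the random walk visits $u$. For any edge $(u,v)$ in the support of $f_i$, by the definition of \emph{choose-next-vertex}, conditioning on visiting $u$, the walk proceeds to $v$ with probability $f_i(u,v)/\sum_{v'\in\out(u,f_i)} f_i(u,v')$. Hence
\begin{align*}
\prbig{(u,v)\in P_i} \;=\; q_i(u)\cdot \frac{f_i(u,v)}{\sum_{v'\in \out(u,f_i)} f_i(u,v')}.
\end{align*}
Therefore, to obtain the claim it suffices to show that $q_i(u) = \sum_{v'\in \out(u,f_i)} f_i(u,v')$ for every vertex $u$ in the support of $f_i$ other than the sink, and that edges outside the support carry zero probability trivially on both sides.

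For the base case $u=a_i$ (the source), we have $q_i(a_i)=\prbig{b_i=1}=|f_i|$, and since the support is acyclic with $a_i$ as a source of the flow, $|f_i|=\sum_{v'\in\out(a_i,f_i)} f_i(a_i,v')$, as required. For the inductive step, process vertices of the support in topological order. A visit to an internal vertex $u$ occurs precisely when the walk arrives via some in-edge $(x,u)$ in the support, so
\begin{align*}
q_i(u) \;=\; \sum_{x\in \inn(u,f_i)} \prbig{(x,u)\in P_i} \;=\; \sum_{x\in \inn(u,f_i)} f_i(x,u),
\end{align*}
where the second equality uses the induction hypothesis applied to edges $(x,u)$ that are earlier in the topological order. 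Flow conservation at $u$ then gives $\sum_{x\in \inn(u,f_i)} f_i(x,u) = \sum_{v'\in \out(u,f_i)} f_i(u,v')$, closing the induction.

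I do not expect any serious obstacle here; the only subtle point is being careful that ``$(u,v)\in P_i$'' implicitly folds in the event $b_i=1$, so that the base case correctly absorbs the factor $|f_i|$ into $q_i(a_i)$, and that flow conservation is used at internal vertices only (not at $a_i$ or $b_i$), which is consistent with the induction being carried out only over vertices reachable from $a_i$ in the acyclic support.
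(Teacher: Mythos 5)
Your proof is correct and follows the same route the paper indicates: induction along a topological ordering of the acyclic support, using the walk's transition probabilities, the base case $\prbig{b_i=1}=|f_i|$ at the source, and flow conservation at internal vertices. Your careful reading that the event $(u,v)\in P_i$ includes $b_i=1$ is exactly what makes the unconditional equality $\prbig{(u,v)\in P_i}=f_i(u,v)$ (rather than $f_i(u,v)/|f_i|$) come out right.
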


\end{document}